\newtheorem{prop}{Proposition}
\newtheorem{rem}{Remark}
\journal{Journal of \LaTeX\ Templates}
\begin{document}

\begin{frontmatter}

\title{On pattern formation in reaction-diffusion systems containing self- and cross-diffusion}
\tnotetext[mytitlenote]{Fully documented templates are available in the elsarticle package on \href{http://www.ctan.org/tex-archive/macros/latex/contrib/elsarticle}{CTAN}.}

\author{Benjamin Aymard\fnref{myfootnote}}
\address{MathNeuro Team, Inria Sophia Antipolis Méditerranée, Sophia Antipolis Cedex, France}




\begin{abstract}
In this article we propose a unified framework in order to study reaction-diffusion systems containing self- and cross-diffusion using a free energy approach. 
This framework naturally leads to the formulation of an energy law, and to a numerical method respecting a discrete version of the latter.
It constitutes an alternative method and complements the standard linear stability analysis,
as it allows for the numerical study of nonlinear patterns, while monitoring the energy evolution, even in complex geometries. 
As an application, we propose and study a modified Gray-Scott system augmented with self- and cross-diffusion terms.
Numerical simulations unveil original patterns, clearly distinct from those obtained with linear diffusion only.
\end{abstract}

\begin{keyword}
reaction-diffusion  \sep self-diffusion \sep cross-diffusion \sep Gray-Scott 
\sep energy law \sep finite-element method 
\sep Turing pattern \sep travelling wave \sep complex geometry.
\MSC[2010] 00-01\sep  99-00
\end{keyword}

\end{frontmatter}


\section*{Introduction} 

\addcontentsline{toc}{section}{Introduction} 

Reaction-diffusion systems are present in numerous domains ranging from physics to chemistry to biology.
They are known to generate two 
notable 
classes of solutions: Turing patterns and travelling waves.
Since the seminal work of Alan Turing in 1952 \cite{turing},
it is known that the introduction of linear diffusion in dynamical systems may lead, under certain conditions,
to the formation of instabilities known as Turing patterns.
Turing's predictions have been observed in experiments, 
such as the Belousov-Zhabotinskii reaction 
\cite{BZ1} and \cite{BZ2}, 
the CIMA reaction \cite{CIMA},
and in many biological applications such as chemotaxis and tumor growth \cite{KS}, skin pattern formation \cite{GM} or in neuroscience \cite{patternFN}.
Travelling waves occur, for instance, in the study of systems such as combustion \cite{combustion}, 
fire propagation \cite{fire}, and ecology \cite{fisher}. Density evolution through a reactive medium can even trigger a branching process, leading to tree structure formation \cite{branching}.

Reaction-diffusion systems can be generalized by including nonlinear diffusion terms:
self-diffusion, corresponding to a diffusion proportional to the density,
and cross-diffusion, a tendency to diffuse along the gradient of another coexisting density.
Such generalization is particularly relevant in biology, for instance in predator-prey systems (notably the so-called SKT system \cite{SKT}) and in epidemiology models, where species tend to avoid crowding or avoid others.
It is also relevant in chemistry, with the tendency of certain species to repel each other.

This generalization has been widely studied in recent years \cite{mimura}, \cite{martinez}, \cite{boris}, \cite{trescases}, and pattern formation has been reported upon \cite{gambino}, \cite{turingSKT}, \cite{turingSIR}, \cite{turingSelf} (among others).
Standard linear stability analysis is a powerful tool for their analysis.
However, such analysis is only local, and even though a dispersion relation could predict instabilities and pattern formation, 
it cannot geometrically describe nonlinear patterns, nor can it describe interactions with domain borders, 
monitor the energy evolution of the full nonlinear system, or study the interaction between cross-diffusion and self-diffusion.
Moreover, there could be cases where nonlinear effects could be dominant, making the linear analysis irrelevant. 

In this work we provide a unified framework for the study of 
reaction-diffusion systems containing self- and cross-diffusion in the general case with $M$ interacting species.
To the best of our knowledge, such a framework does not exist. 
Our approach follows an energetic point of view and is inspired by diffuse-interface theory.
The most prominent example is the Cahn-Hilliard system \cite{CH} proposed in 1958:
one equation describes density evolution, with a diffusion along the gradient of a so-called chemical potential,
and one equation defines the chemical potential as a functional derivative of a free energy.
This formulation gives a natural framework for writing a free energy law, and in designing a finite-element method based on a mixed formulation \cite{JCP2013}. During the design of our numerical method, specific attention has been given to two challenging points: energy stability and accuracy.

The classical Gray-Scott (GS) system \cite{GS}, 
introduced in 1983,
is known as a good example for generating a wide variety of patterns \cite{pearson}. 
This system may be considered as a reaction-diffusion generalization 
of a special case of the Sel'kov model of glycolysis from 1968 \cite{Selkov}.
To exemplify our methodology, we introduce a generalization of the GS system, augmented with self- and cross-diffusion terms.
We study how Turing patterns and travelling waves differ when they are generated with different combinations of self- and cross-diffusion,
compared to classical linear diffusion, and how geometry 
affects
 pattern formation, while monitoring energy evolution.

The article is organized as follows.
In the first section, we propose a unified framework for studying reaction-diffusion systems containing self- and cross-diffusion terms,
we establish the corresponding energy law verified by solutions of the system, 
and we propose 
a
generalization of the GS system, augmented with self and cross-diffusion terms.
The second section is devoted to the design of the numerical method which approaches solutions to this model,
and the demonstration of a discrete version of the energy law.
The last section presents numerical simulations of the generalized GS system, with formation of original patterns in the presence of different combinations of linear, self- and cross-diffusion, within various geometries.

\section{Model}

In this section we introduce a unified framework for reaction-diffusion problems containing self- and/or cross-diffusion terms.
Interested readers may find an introduction to the topic of classical reaction-diffusion systems in \cite{perthame}.

\subsection{Unified framework}

We consider the dynamics of M interacting species in a bounded domain $\Omega \subset \mathbb{R}^d$ with $d$ the spatial dimension.
Let us denote $\phi_i(\textbf{X},t)$ the density of species $i$ at position $\textbf{X}$ and time $t$.
The evolution of the system is described by:

\begin{align}
\forall i \in \{1,...,M\}, \quad 
& \frac{\partial \phi_i}{\partial t} - \nabla^2 \mu_i = R_i,\notag \\
& R_i = R_i(\phi_1,...,\phi_M),\notag \\
& \mu_i = \left( d_{i} + d_{ii} \phi_i^{\alpha_i} + \sum_{j \not= i}d_{ij} \phi_j^{\beta_{ij}} \right) \phi_i,
\label{PDE}
\end{align}
with $R_i$ the reaction terms, and $\mu_i$ may be called chemical potentials in the framework of thermodynamics, 
if they derive from a free energy.
This system involves several levels of diffusion: 
a linear diffusion for each species, tuned by parameters $d_{i} \geq 0$, describing the tendency to fill up the space, 
a self-diffusion for each species, tuned by parameters $d_{ii} \geq 0$ and power $\alpha_i \geq 0$, describing the tendency of each species to avoid crowding, 
and cross-diffusion terms between species, tuned by parameters  $d_{ij} \geq 0$ and powers $\beta_{ij} \geq 1$, describing the tendency of each species to avoid another.
In order to mathematically close this problem, we have to add initial conditions,
and boundary conditions under the form of confinement in the domain $\Omega$:

\begin{align}
\forall i \in \{1,...,M\}, \quad
& \phi_i(\textbf{X},t=0) = \phi_i^0(\textbf{X}),\notag \\
& \nabla \mu_i \cdot n = 0 \mbox{ on }\partial \Omega,
\label{BC}
\end{align}

with $n$ the outward unit normal vector. 
If other types of boundary conditions are considered, the variational formulation and the numerical schemes might be adapted accordingly.
The reader may note that models \cite{gambino}, \cite{turingSKT}, \cite{turingSIR}, \cite{turingSelf} may be written under the form of (\ref{PDE}).
Multiplying (\ref{PDE}) by test functions in $H^1(\Omega)$,
integrating over the physical domain $\Omega$, and using integration by parts, we get the weak form of the system:  

\begin{align}
\forall i \in \{1,...,M\}, \forall \theta_i, \nu_i & \in H^1(\Omega),\notag \\
& \left( \frac{\partial \phi_i}{\partial t},\theta_i \right) + (\nabla\mu_i , \nabla \theta_i) 
- (R_i,\theta_i) = 0,\notag\\
& (\mu_i,\nu_i) = \left( \left( d_{i} + d_{ii} \phi_i^{\alpha_i} + \sum_{j\not=i}d_{ij} \phi_j^{\beta_{ij}}\right) \phi_i,\nu_i \right),
\label{weakForm}
\end{align}

where $(f,g) = \int_{\Omega}fg$ denotes the scalar product.
The mass of each species $i$ is defined as the integral of its density over the physical domain:
\begin{equation}
M_i(t) = \int_{\Omega}\phi_i(\textbf{X},t).
\label{M}
\end{equation}

\begin{prop} 
The mass variation of the solutions of (\ref{weakForm}) satisfies the mass law:
\begin{equation}
\frac{d}{dt} M_i(t) =  \int_{\Omega}R_i.
\label{Mlaw}
\end{equation}
\end{prop}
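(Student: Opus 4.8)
The plan is to exploit the fact that $\Omega$ is bounded, so the constant function $\mathbbm{1}_\Omega$ belongs to $H^1(\Omega)$ and is therefore an admissible test function in the weak formulation (\ref{weakForm}). First I would take the first line of (\ref{weakForm}) and choose $\theta_i = 1$. The diffusion term then drops out immediately, since $\nabla \theta_i = 0$ gives $(\nabla \mu_i, \nabla \theta_i) = 0$; this is precisely where the no-flux boundary condition $\nabla \mu_i \cdot n = 0$ in (\ref{BC}) has already been absorbed, because it was used to discard the boundary term when passing from (\ref{PDE}) to (\ref{weakForm}).

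Next I would rewrite the two remaining terms. By definition of the scalar product, $\left(\frac{\partial \phi_i}{\partial t}, 1\right) = \int_\Omega \frac{\partial \phi_i}{\partial t}$ and $(R_i, 1) = \int_\Omega R_i$. Assuming enough regularity in time on $\phi_i$ to differentiate under the integral sign (which is implicit in speaking of solutions of (\ref{weakForm})), I would identify $\int_\Omega \frac{\partial \phi_i}{\partial t} = \frac{d}{dt}\int_\Omega \phi_i = \frac{d}{dt} M_i(t)$, using the definition (\ref{M}) of $M_i$. Combining these, the chosen instance of the weak equation reads $\frac{d}{dt} M_i(t) - \int_\Omega R_i = 0$, which is exactly (\ref{Mlaw}).

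Alternatively, and perhaps more transparently, one can argue directly from the strong form (\ref{PDE}): integrating $\frac{\partial \phi_i}{\partial t} - \nabla^2 \mu_i = R_i$ over $\Omega$ and applying the divergence theorem to $\nabla^2 \mu_i = \nabla \cdot (\nabla \mu_i)$ yields $\int_\Omega \nabla^2 \mu_i = \int_{\partial \Omega} \nabla \mu_i \cdot n = 0$ by (\ref{BC}), so that $\frac{d}{dt} M_i(t) = \int_\Omega \frac{\partial \phi_i}{\partial t} = \int_\Omega R_i$. I would present the weak-form version as the main line of argument, since it is the formulation the rest of the paper (and the numerical scheme) rests upon.

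The computation is essentially immediate; there is no real obstacle beyond bookkeeping. The only point deserving a word of care is the justification of exchanging $\frac{d}{dt}$ and $\int_\Omega$, i.e.\ the time-regularity assumed on solutions, and the observation that the mass law is genuinely a consequence of the confinement boundary condition — with a different (e.g.\ flux) boundary condition, an extra boundary integral would appear on the right-hand side of (\ref{Mlaw}).
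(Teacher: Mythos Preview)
Your proof is correct and follows exactly the paper's own approach: the paper's proof is the single line ``Consider $\theta_i = 1$ in weak formulation (\ref{weakForm})''. You have simply spelled out the details (vanishing of the gradient term, exchange of $\frac{d}{dt}$ and $\int_\Omega$) and added a strong-form variant, but the core idea is identical.
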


\begin{proof} 
Consider $\theta_i = 1$ in weak formulation (\ref{weakForm}).
\end{proof}

Let us define the differential form:
\begin{equation}
\omega = \sum_{j=1}^M \mu_j d \phi_j.
\label{omega}
\end{equation}
We can decompose this differential form by separating the exact part and the non-exact part:
\begin{equation}
\omega = d F + \sum_{j=1}^M g_j d \phi_j.
\label{diffForm}
\end{equation}

\begin{prop} 

If the differential form $\omega$ (\ref{omega}) is exact, solutions of system (\ref{weakForm}), if they exist, admit an energy law of the form:

\begin{equation}
\frac{d}{dt}
\left(E(\phi_1(t),...,\phi_M(t))\right)
= 
- \sum_{i=1}^M \| \nabla \mu_i \|^2_{L^2(\Omega)}
+ \sum_{i=1}^M (R_i,\mu_i),
\label{Elaw}
\end{equation}

with:

\begin{equation}
E(\phi_1(t),...,\phi_M(t)) 
=
\int_{\Omega}
F(\phi_1(t),...,\phi_M(t)) .
\label{E}
\end{equation}

\end{prop}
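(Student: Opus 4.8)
The plan is to differentiate $E$ in time by the chain rule, use the exactness of $\omega$ to identify $\partial F/\partial\phi_j$ with $\mu_j$, and then recover the right-hand side by taking $\mu_i$ itself as test function in the weak formulation (\ref{weakForm}).

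First I would unpack the hypothesis that $\omega=\sum_{j=1}^M\mu_j\,d\phi_j$ is exact. In the decomposition (\ref{diffForm}) this means $g_j\equiv 0$ for all $j$, so there is a scalar function $F=F(\phi_1,\dots,\phi_M)$ on the space of density values with $\partial F/\partial\phi_j=\mu_j$ for every $j$. This is the step where the particular algebraic form of the potentials $\mu_i$ in (\ref{PDE}) enters: exactness forces the closedness (Schwarz) relations $\partial\mu_j/\partial\phi_k=\partial\mu_k/\partial\phi_j$, which a generic choice of $\mu_i$ does not satisfy. Granting it, composition with a solution $\big(\phi_1(\textbf{X},t),\dots,\phi_M(\textbf{X},t)\big)$ and the pointwise chain rule in $\textbf{X}$ give
\[
\frac{\partial}{\partial t}\,F\big(\phi_1(\textbf{X},t),\dots,\phi_M(\textbf{X},t)\big)=\sum_{j=1}^M\mu_j\,\frac{\partial\phi_j}{\partial t}.
\]

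Next I would integrate this identity over $\Omega$ and, assuming the solution is regular enough in space and time to differentiate under the integral sign — which is implicit in the phrase \emph{if they exist} — obtain $\frac{d}{dt}E(\phi_1(t),\dots,\phi_M(t))=\sum_{j=1}^M\big(\tfrac{\partial\phi_j}{\partial t},\mu_j\big)$. I would then choose $\theta_j=\mu_j$ as test function in the first line of (\ref{weakForm}), which is legitimate as soon as $\mu_j\in H^1(\Omega)$, a property that follows from the defining relation for $\mu_j$ in (\ref{weakForm}) together with sufficient regularity of the $\phi_i$. That substitution gives, for each $j$,
\[
\left(\frac{\partial\phi_j}{\partial t},\mu_j\right)=-\,\|\nabla\mu_j\|_{L^2(\Omega)}^2+(R_j,\mu_j),
\]
and summing over $j$, then relabelling the index as $i$, yields exactly (\ref{Elaw}).

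The only genuine obstacle here is regularity rather than algebra: one must make sure that $\mu_i\in H^1(\Omega)$ so that it is an admissible test function, and that $d/dt$ and $\int_\Omega$ may be interchanged. Both are harmless for classical solutions, and since the statement is deliberately conditional, the computation above is essentially the complete argument; the substantive hypothesis is the exactness of $\omega$, which must be verified case by case for a given family of potentials $\mu_i$.
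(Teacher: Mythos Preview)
Your argument is correct and follows exactly the paper's approach: test the weak formulation with $\theta_i=\mu_i$, sum over $i$, and use exactness of $\omega$ to rewrite $\sum_i(\mu_i,\partial_t\phi_i)$ as $\frac{d}{dt}\int_\Omega F$. The paper presents these two ingredients in the opposite order and without your regularity remarks, but the substance is identical.
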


\begin{proof}
Let us consider $\theta_i =\mu_i$ in (\ref{weakForm}).
For all $i \in \{1,...,M\}$:
\begin{eqnarray*}
\left( \frac{\partial \phi_i}{\partial t},\mu_i \right) + (\nabla\mu_i , \nabla \mu_i) - (R_i,\mu_i) = 0.
\end{eqnarray*}

By summing all equations we get:

\[
\sum_{i=1}^M \left(\mu_i,\frac{\partial \phi_i}{\partial t}\right)
=
- \sum_{i=1}^M (\nabla \mu_i , \nabla \mu_i)
+ \sum_{i=1}^M(R_i,\mu_i).
\]

We conclude using the decomposition (\ref{diffForm}):
\[
\sum_{i=1}^M\left(\mu_i,\frac{\partial \phi_i}{\partial t}\right)
= \sum_{i=1}^M \left(\frac{\partial F}{\partial \phi_i},\frac{\partial \phi_i}{\partial t}\right)
= \frac{d}{dt}\left( F,1 \right).
\]
\end{proof}

\begin{rem} 
In the purely diffusive case ($\forall i, R_i = 0$), the free energy spontaneously decreases until reaching a state with null gradient of potential $\mu_i$.
\end{rem}

\begin{rem}
In the symmetric case $d_{ij}=d_{ji}$ and $\beta_{ij} = \beta_{ji} = 2$, the energy reads:
\begin{equation}
E(\phi_1,...,\phi_M) = 
\int_{\Omega}\
\left(\sum_{i=1}^M
d_{ii}
\frac{\phi_i^{\alpha_i+2}}{\alpha_i+2} 
+  d_i\frac{\phi_i^2}{2} 
+ \sum_{j>i} d_{ij}\frac{\phi_i^2 \phi_j^2}{2}
\right).
\label{symLaw}
\end{equation}
This case is relevant in physics: because of the principle of reciprocity, the formula is symmetric.
\end{rem}

\begin{prop} 
Solutions of the asymmetric system with $M=2$, $d_{21} = 0$ and $\beta_{12} = 1$, if they exist, satisfy the energy  law:
\begin{equation}
\frac{d}{dt} E(\phi_1,\phi_2) = 
- \sum_{i=1}^2 \|\nabla \mu_i \|^2 
+ \sum_{i=1}^2 (R_i,\mu_i) 
-\frac{d_{12}}{2} \left(\nabla \mu_2 , \nabla \phi_1^2 \right)
+\frac{d_{12}}{2} \left( R_2, \phi_1^2\right),
\label{asymElaw}
\end{equation}
with
\begin{equation}
E(\phi_1,\phi_2) = 
\int_{\Omega} 
\sum_{i=1}^2 \left( d_i \frac{\phi_i^2}{2} + d_{ii} \frac{\phi_i^{\alpha_i+2}}{\alpha_i+2} \right) 
+ \int_{\Omega}  d_{12} \phi_2 \frac{\phi_1 ^2}{2}
\label{aSymE}
\end{equation}
\end{prop}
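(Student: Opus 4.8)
The plan is to mimic the proof of the abstract energy law (\ref{Elaw}), but to keep explicit track of the non-exact part of $\omega$ appearing in the decomposition (\ref{diffForm}). First I would specialise the potentials in (\ref{PDE}) to the present parameters: with $M=2$, $d_{21}=0$ and $\beta_{12}=1$ one gets $\mu_1 = \left(d_1 + d_{11}\phi_1^{\alpha_1} + d_{12}\phi_2\right)\phi_1$ and $\mu_2 = \left(d_2 + d_{22}\phi_2^{\alpha_2}\right)\phi_2$. Then I would check that the integrand $F$ of the energy (\ref{aSymE}) satisfies $\partial F/\partial\phi_1 = \mu_1$ exactly, while $\partial F/\partial\phi_2 = \mu_2 + \tfrac{d_{12}}{2}\phi_1^2$. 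In the language of (\ref{diffForm}) this means $\omega = dF + g_1\,d\phi_1 + g_2\,d\phi_2$ with $g_1 = 0$ and $g_2 = -\tfrac{d_{12}}{2}\phi_1^2$; the single term $-\tfrac{d_{12}}{2}\phi_1^2\,d\phi_2$ is precisely the obstruction to exactness, and it is what will generate the two extra terms in (\ref{asymElaw}).

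Next I would carry out the energy estimate. Taking $\theta_i = \mu_i$ in the first equation of (\ref{weakForm}) for $i=1,2$ and summing gives $\sum_{i=1}^2\left(\mu_i,\frac{\partial\phi_i}{\partial t}\right) = -\sum_{i=1}^2\|\nabla\mu_i\|^2 + \sum_{i=1}^2(R_i,\mu_i)$, exactly as in the proof of (\ref{Elaw}). Substituting $\mu_1 = \partial F/\partial\phi_1$ and $\mu_2 = \partial F/\partial\phi_2 - \tfrac{d_{12}}{2}\phi_1^2$ into the left-hand side, and using the chain rule together with the definition (\ref{E}) of the energy, the left-hand side becomes $\frac{d}{dt}\int_\Omega F - \tfrac{d_{12}}{2}\left(\phi_1^2,\frac{\partial\phi_2}{\partial t}\right)$. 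Hence $\frac{d}{dt}E(\phi_1,\phi_2) = -\sum_{i=1}^2\|\nabla\mu_i\|^2 + \sum_{i=1}^2(R_i,\mu_i) + \tfrac{d_{12}}{2}\left(\phi_1^2,\frac{\partial\phi_2}{\partial t}\right)$.

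It then remains only to rewrite the last term. Testing the first equation of (\ref{weakForm}) for $i=2$ against $\theta_2 = \phi_1^2$ gives $\left(\phi_1^2,\frac{\partial\phi_2}{\partial t}\right) = -(\nabla\mu_2,\nabla\phi_1^2) + (R_2,\phi_1^2)$, and plugging this back yields exactly (\ref{asymElaw}). The delicate point, and the only real obstacle, is admissibility: $\phi_1^2$ must belong to $H^1(\Omega)$ for it to be a legitimate test function in (\ref{weakForm}) (which holds once, say, $\phi_1 \in H^1\cap L^\infty$ in space so that $\phi_1\nabla\phi_1\in L^2$), and $t\mapsto\int_\Omega F$ must be differentiable with $\frac{d}{dt}\int_\Omega F = \sum_i\left(\partial F/\partial\phi_i,\partial\phi_i/\partial t\right)$. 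Both are guaranteed by the same (formal) regularity already implicitly assumed in the proof of (\ref{Elaw}), so, the statement being conditional on existence of solutions, I would simply perform these manipulations formally. The rest is a routine verification of signs and of the factors $\tfrac12$.
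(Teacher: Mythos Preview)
Your proof is correct and follows essentially the same route as the paper: both arrive at the intermediate identity $\frac{dE}{dt} = -\sum_{i=1}^2\|\nabla\mu_i\|^2 + \sum_{i=1}^2(R_i,\mu_i) + \tfrac{d_{12}}{2}\bigl(\phi_1^2,\partial_t\phi_2\bigr)$ and then handle the residual term by testing the $\phi_2$-equation against $\phi_1^2$ (the paper uses $\theta_2=\tfrac{1}{2}\phi_1^2$, which amounts to the same thing). The only cosmetic difference is that the paper obtains that intermediate identity by differentiating $E$ directly, whereas you start from $\sum_i(\mu_i,\partial_t\phi_i)$ and substitute $\mu_i$ in terms of $\partial F/\partial\phi_i$; your added remarks on the decomposition~(\ref{diffForm}) and on the regularity needed for $\phi_1^2\in H^1(\Omega)$ are a welcome clarification absent from the paper.
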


\begin{proof}
Differentiating energy (\ref{aSymE}) with respect to time, we get:
\[
\frac{dE}{dt} = 
\left(\mu_1 ,\frac{\partial \phi_1}{dt}\right)
+ \left(\mu_2 ,\frac{\partial \phi_2}{dt}\right)
+ d_{12}\left(\frac{\phi_1^2}{2}, \frac{\partial \phi_2}{\partial t}\right)
\]

Considering $\theta_i = \mu_i, \nu_i = \frac{\partial \phi_i}{\partial t}$ in variational formulation (\ref{weakForm}), we get:

\[
\frac{dE}{dt} = 
- \sum_{i=1}^2 \|\nabla \mu_i \|^2 
+ \sum_{i=1}^2 (R_i,\mu_i) 
+ d_{12}\left(\frac{\phi_1^2}{2}, \frac{\partial \phi_2}{\partial t}\right)
\]
Finally, for the uncompensated term, considering $\theta_2 = \frac{\phi_1^2}{2}$ in (\ref{weakForm}), we get:
\[
\left(\frac{\phi_1^2}{2}, \frac{\partial \phi_2}{\partial t}\right)
=
-\frac{1}{2} \left(\nabla \mu_2 , \nabla \phi_1^2 \right)
+\frac{1}{2} \left( R_2, \phi_1^2\right)
\]
\end{proof}

\begin{rem}
This proposition may be easily generalized to cases with $M \geq 2$, as asymmetric cross-diffusion terms only affect two species $i$ and $j$ with $i \not= j$ in (\ref{weakForm}).
\end{rem}

\begin{rem}
In the purely diffusive case ($\forall i, R_i = 0$), without further assumptions, 
the sign of the uncompensated terms $\left(\nabla \mu_2 , \nabla \phi_1^2 \right)$ cannot be controlled a priori.
The proof of whether or not the free energy spontaneously decreases in this case remains open to discussion. 
\end{rem}

\begin{rem}
This asymmetric case corresponds to the so-called triangular case in SKT literature \cite{SKT}.
This case is relevant in biology, where interactions could be asymmetric.
The reader may think for instance about predator-prey systems, where prey tend to avoid predators but not the other way around, or epidemiology models, where vulnerable individuals may avoid crowds or others already infected.
\end{rem}

\subsection{Application: an augmented Gray-Scott system}

Here, we introduce an original model, derived from the classical Gray-Scott system, 
in order to show the capabilities of our proposed framework.

The Gray-Scott system \cite{GS} is a generic reaction-diffusion system known to contain rich dynamics, 
and able to generate a wide variety of patterns (see \cite{pearson}).
A typical application is the chemical reaction of three components in a stirred tank:
a reactant U, a catalyst V, and an inert product P verifying $U+V+P=1$.
From the chemistry point of view, the dynamics reads:
\begin{align*}
U + 2V \rightarrow 3V, \\
V \rightarrow P.
\end{align*}
Let us denote by $\phi_1$ (resp. $\phi_2$) the concentration of $U$ (resp. $V$). The classical system reads:
\begin{align}
\frac{\partial \phi_1}{\partial t} - \nabla^2 \phi_1 &= -\phi_1\phi_2^2 + F(1-\phi_1), \notag\\
\frac{\partial \phi_2}{\partial t} - \nabla^2 \phi_2 &= \phi_1\phi_2^2 - (F+k)\phi_2.\label{GrayScott}
\end{align}

The reaction term is composed of a replenishment term $F(1-\phi_1)$ for reactant $U$,
an exchange term $\phi_1\phi_2^2$ and a sink term $- (F+k)\phi_2$.

\begin{rem}
In order to avoid confusion, we call the term $\phi_1\phi_2^2$ an exchange term.
However, the reader may note that in the Gray-Scott literature, this term is usually referred to as a reaction term.
\end{rem}

We propose to consider a modification of this system, by adding self- and cross-diffusion terms:
\begin{align}
\frac{\partial \phi_1}{\partial t} - \nabla^2 \mu_1 &= -\phi_1\phi_2^2 + F(1-\phi_1), \notag \\
\frac{\partial \phi_2}{\partial t} - \nabla^2 \mu_2 &= \phi_1\phi_2^2 - (F+k)\phi_2,\notag\\
\mu_1 = (d_{1} &+ d_{11} \phi_1^{\alpha_1} + d_{12} \phi_2^{\beta_{12}})\phi_1,\notag\\
\mu_2 = (d_{2} &+ d_{22} \phi_2^{\alpha_2} + d_{21} \phi_1^{\beta_{21}})\phi_2. \label{mGrayScott}
\end{align}
Note that this system has the form of (\ref{PDE}). 
We will numerically investigate the dynamics of this system in the last section using the methodology developed in this article.

\section{Numerical Method}

In this section we derive a numerical method in order to simulate system (\ref{PDE}),
following the methodology developed in \cite{JCP2013} and \cite{JCP2019}.
Interested readers may find an introduction to numerical methods in \cite{Allaire}.
Numerical simulations were done using the free open source software FreeFem++ \cite{freefem} version 4.6.

\subsection{Operator for nonlinear diffusion}

In this paragraph we introduce a numerical method in order to simulate the nonlinear diffusion part of the system. 
Let us denote by $\Delta t^n$ the time step at iteration $n$, $T>0$ a final time, 
and $\phi_i^n$ the density of species $i$ at time $t^n = \sum_{i=0}^n \Delta t^i$.
Knowing that
in the symmetric case (where the differential form $\omega$ defined by (\ref{omega}) is exact), 
at continuous level, the system respects energy law (\ref{Elaw}), we design our method such that a similar property is verified at the discrete level. 
Starting with a Taylor-Lagrange expansion of the energy 

and using the definition of $E$ and $\mu_i$, we get, using multi-index notation:

\begin{align*}
E(\phi + \delta)
= E(\phi) 
+ \int_{\Omega} \sum_{i=1}^M \mu_i \delta_i
+  \int_{\Omega} \frac{1}{2} \sum_{i=1}^M \sum_{j=1}^M \frac{\partial \mu_i}{\partial \phi_j}\delta_i \delta_j
+ \int_{\Omega} \frac{1}{6} \sum_{i=1}^M \sum_{j=1}^M \sum_{k=1}^M R_{i,j,k} \delta_i \delta_j \delta_k,
\end{align*}

with $\delta = (\delta_1,...,\delta_M)$ and

\[
|R_{i,j,k}| \leq \max_{i,j,k} \left\| \frac{\partial^2 \mu_i}{\partial \phi_j \partial \phi_k} \right\|_{C([0,T],L^{\infty}(\Omega))}.
\] 

Therefore, energy variation reads:

\begin{align}
\frac{E^{n+1} - E^n}{\Delta t}
= 
& \sum_{i=1}^M \int_{\Omega} \left(\frac{\phi_i^{n+1} - \phi_i^n}{\Delta t}\right)
\left(
\mu_i^n + \frac{1}{2}\sum_{j=1}^M (\phi^{n+1}_j - \phi^n_j)\frac{\partial \mu_i^n}{\partial \phi_j}
\right)\\
+ 
& 
\frac{\Delta t^2}{6} \sum_{i=1}^M \sum_{j=1}^M \sum_{k=1}^M 
\int_{\Omega}  R_{i,j,k} 
\frac{(\phi_i^{n+1} - \phi_i^n)}{\Delta t}
\frac{(\phi_j^{n+1} - \phi_j^n)}{\Delta t}
\frac{(\phi_k^{n+1} - \phi_k^n)}{\Delta t}.
\label{Evariation}
\end{align}

We define the operator $B_{\Delta t}$ approximating the diffusive part as the solution of the variational problem:
\begin{align}
\forall i \in & \{1,...,M\},\forall \theta_i, \nu_i \in H^1(\Omega), \notag\\
& \left( \displaystyle\frac{B_{\Delta t}(\phi_i^{n}) - \phi_i^n}{\Delta t},\theta_i \right) + (\nabla\mu_i^{n+1/2} , \nabla \theta_i) = 0,\notag\\
& (\mu_i^{n+1/2},\nu_i) = \left( \mu_i^n + \frac{1}{2}\sum_{j=1}^M (\phi^{n+1}_j - \phi^n_j)\frac{\partial \mu_i^n}{\partial \phi_j}, \nu_i \right).
\label{operatorB}
\end{align}

Defining a mesh $T_h$ of the domain $\Omega$ and replacing $H^1(\Omega)$ by discrete finite element achieves the space discretization.
The variational formulation is reduced to a linear system that we solve at each time step.
In practice, we use an adaptive mesh refinement strategy.
We refer the reader to \cite{freefem} for the details in Freefem++ mesh adaptation strategy.
Using the function \textit{adaptmesh}, we refine the mesh locally according to variation of every component of both $\phi$ and $\mu$, with a maximum triangle size of $h_{\max}$ and a minimum set to $h_{\min}$ (parameter values will be specified in the next section).

\begin{rem}
Conservation of positivity for operator $B_{\Delta t}$ has been observed numerically in all the simulations done for this article. However, the proof remains open.
\end{rem}

\begin{prop} 
In the symmetric case where the differential form $\omega$ (\ref{omega}) is exact,
the numerical operator $B_{\Delta t}$, defined by (\ref{operatorB}), respects a discrete energy law of the form:
\begin{equation}
\frac{E^{n+1}-E^{n}}{\Delta t} 
= - \sum_{i=1}^M \| \nabla \mu_i^{n+1/2}\|^2  + \epsilon,
\label{errorB}
\end{equation}
with:
\[
\epsilon = O (\Delta t^2).
\]
\end{prop}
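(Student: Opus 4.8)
The plan is to exploit the fact that the energy-variation identity (\ref{Evariation}) has already done the bulk of the work: it is exactly the third-order Taylor--Lagrange expansion of $E$ about $\phi^n$ with increment $\delta = \phi^{n+1} - \phi^n$, where $\phi_i^{n+1} := B_{\Delta t}(\phi_i^n)$. Proving (\ref{errorB}) then amounts to matching the two groups of terms on the right-hand side of (\ref{Evariation}) with the two terms on the right-hand side of (\ref{errorB}).

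First I would handle the quadratic-order line of (\ref{Evariation}). Its parenthesised factor, $\mu_i^n + \frac12 \sum_{j=1}^M (\phi_j^{n+1}-\phi_j^n)\frac{\partial \mu_i^n}{\partial \phi_j}$, is precisely the argument defining $\mu_i^{n+1/2}$ in the second equation of (\ref{operatorB}). Since $\frac{1}{\Delta t}(\phi_i^{n+1}-\phi_i^n)$ belongs to the discrete finite-element space, it is an admissible choice $\nu_i$ in that equation, which gives
\[
\int_\Omega \frac{\phi_i^{n+1}-\phi_i^n}{\Delta t}\left(\mu_i^n + \frac12 \sum_{j=1}^M (\phi_j^{n+1}-\phi_j^n)\frac{\partial \mu_i^n}{\partial \phi_j}\right) = \left(\frac{\phi_i^{n+1}-\phi_i^n}{\Delta t},\,\mu_i^{n+1/2}\right).
\]
Then, taking $\theta_i = \mu_i^{n+1/2}$ in the first equation of (\ref{operatorB}) yields $\left(\frac{1}{\Delta t}(\phi_i^{n+1}-\phi_i^n),\,\mu_i^{n+1/2}\right) = -\|\nabla \mu_i^{n+1/2}\|^2$. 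Summing over $i$ converts the first line of (\ref{Evariation}) into exactly $-\sum_{i=1}^M \|\nabla \mu_i^{n+1/2}\|^2$, the main term of (\ref{errorB}).

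It then remains to define $\epsilon$ as the remaining cubic-order line of (\ref{Evariation}) and to justify $\epsilon = O(\Delta t^2)$. The factor $\Delta t^2/6$ is explicit, and the remainder coefficients satisfy $|R_{i,j,k}| \le \max_{i,j,k}\|\frac{\partial^2 \mu_i}{\partial \phi_j \partial \phi_k}\|_{C([0,T],L^\infty(\Omega))}$ as already recorded; hence the bound follows as soon as the discrete time-difference quotients $\frac{1}{\Delta t}(\phi_\ell^{n+1}-\phi_\ell^n)$ are controlled uniformly in $\Delta t$ --- for instance in $L^3(\Omega)$ --- which I would take as a standing regularity assumption (consistency of the scheme with $\partial_t \phi_\ell$ for a smooth solution). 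This last point is the genuine obstacle: by construction the scheme only controls $\nabla \mu_i^{n+1/2}$ in $L^2$, i.e.\ the difference quotient in $H^{-1}$, which alone does not bound the triple product appearing in $\epsilon$; a fully rigorous $O(\Delta t^2)$ would require an a priori $L^\infty_t W^{1,\infty}$-type estimate on the discrete densities, and in the spirit of the positivity remark above it seems reasonable to keep this as a formal estimate.
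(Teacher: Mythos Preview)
Your proof is correct and follows essentially the same route as the paper: choose $\theta_i=\mu_i^{n+1/2}$ and $\nu_i=\frac{\phi_i^{n+1}-\phi_i^n}{\Delta t}$ in (\ref{operatorB}) to turn the first line of (\ref{Evariation}) into $-\sum_i\|\nabla\mu_i^{n+1/2}\|^2$, and identify $\epsilon$ with the cubic remainder, bounded via $\max_{i,j,k}\|\partial^2\mu_i/\partial\phi_j\partial\phi_k\|_{C([0,T],L^\infty(\Omega))}$ and an $L^3$ control of the discrete time-derivatives. Your explicit flag that this $L^3$ bound is a regularity \emph{assumption} rather than a consequence of the scheme is a fair caveat that the paper leaves implicit.
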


\begin{proof} 
Let us denote:
\[
\tilde{F}_i = 
\mu_i^n + \frac{1}{2}\sum_{j=1}^M (\phi^{n+1}_j - \phi^n_j)\frac{\partial \mu_i^n}{\partial \phi_j}.
\]
We consider $\theta_i =\mu_i^{n+1/2},  \nu_i = \frac{\phi_i^{n+1} - \phi_i^n}{\Delta t}$ in the numerical scheme (\ref{operatorB}).
Summing up all the equations and cancelling cross terms we obtain:
\[
\sum_{i=1}^M \left(\tilde{F_i},\frac{\phi_i^{n+1} - \phi_i^n}{\Delta t} \right) = - \sum_i \| \nabla \mu_i^{n+1/2}\|^2.
\]
Therefore $\epsilon$ reads:
\begin{align*}
\epsilon & = \frac{E^{n+1} - E^n}{\Delta t} + \sum_i \| \nabla \mu_i^{n+1/2}\|^2\\
& = \frac{E^{n+1} - E^n}{\Delta t} - \sum_{i=1}^M \left(\tilde{F_i},\frac{\phi_i^{n+1} - \phi_i^n}{\Delta t} \right).
\end{align*}
Finally, comparing with (\ref{Evariation}), we see that:
\[
|\epsilon| \leq 
\Delta t^2 \frac{M^3}{6}
\max_{i,j,k}
\left\| \frac{\partial^2 \mu_i}{\partial \phi_j \partial \phi_k} \right\|_{C([0,T],L^{\infty}(\Omega))}
\max_i
\left\| \frac{\partial \phi_i}{\partial t} \right\|_{C([0,T],L^{3}(\Omega))}.
\]
\end{proof}

\begin{rem}
Operator $B_{\Delta t}$, constructed for the symmetric case, can be extended to the asymmetric case, as it is clearly consistent with system (\ref{weakForm}).
However, the proof of second order accuracy in time, similar to the symmetric case, remains open.
\end{rem}

\subsection{General scheme}

We propose a numerical method based on a splitting strategy, solving sequentially the reaction part of the equation, then the diffusion part.
Using a Strang splitting \cite{strang} ensures a second order accuracy in time, as long as the two components are second order accurate with respect to time.

For the reaction part, we use a classical second order method, namely the Heun method.
Let us denote by $A_{\Delta t}$ the operator defined by:

\begin{align}
\forall i \in \{1,...,M\}, \quad
& \phi_i^{*} = \phi_i^n + \Delta t  R_i(\phi_1^n,...,\phi_M^n),\notag \\
& A_{\Delta t}(\phi_i^{n}) = \phi_i^n + \frac{\Delta t}{2} \left( R_i(\phi_1^n,...,\phi_M^n) + R_i(\phi_1^*,...,\phi_M^*) \right).
\label{operatorA}
\end{align}

Using operators $A_{\Delta t}$ and $B_{\Delta t}$ the Strang splitting reads:

\begin{equation}
\phi^{n+1} = A_{\Delta t/2} B_{\Delta t} A_{\Delta t/2} (\phi^n).
\label{scheme}
\end{equation}

Let us define the useful notations: 
$R_i^1 = R_i(\phi_i^n)$, $R_i^2 = R_i(\phi_i^n + \Delta t R_1)$,
$R_i^3 = R_i(B_{\Delta t} (A_{\Delta t/2} (\phi^n)))$ and $R_i^4 = R_i(B_{\Delta t} (A_{\Delta t/2} (\phi^n) + \Delta t R_3))$.

\begin{prop} 
The numerical scheme satisfies a discrete mass law similar to (\ref{Mlaw}) under the form:
\[
\forall n \geq 1, \int_{\Omega} \phi_i^{n+1} 
= \int_{\Omega} \phi_i^n 
+ \frac{\Delta t}{4}(R_i^1 + R_i^2 + R_i^3 + R_i^4).
\]
In particular, in the purely diffusive case $\forall i \in \{1,...M\}, R_i=0$, the mass is conserved.
\end{prop}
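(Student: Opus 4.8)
The plan is to propagate the mass $\int_\Omega \phi_i$ through the three sub-steps of the Strang splitting (\ref{scheme}) and sum the contributions. Since $B_{\Delta t}$ is purely diffusive and $A_{\Delta t/2}$ is purely reactive, the two operators can be treated separately and then composed.

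First I would show that $B_{\Delta t}$ conserves mass. Taking $\theta_i = 1$ in the first equation of (\ref{operatorB}), the term $(\nabla \mu_i^{n+1/2}, \nabla \theta_i)$ vanishes because $\nabla 1 = 0$, and since $B_{\Delta t}$ carries no reaction contribution we obtain directly
\[
\int_\Omega \frac{B_{\Delta t}(\phi_i^n) - \phi_i^n}{\Delta t} = 0, \qquad \text{hence} \qquad \int_\Omega B_{\Delta t}(\phi_i^n) = \int_\Omega \phi_i^n .
\]
The chemical-potential equation of (\ref{operatorB}) plays no role here, so this holds in the symmetric and asymmetric cases alike.

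Next I would compute the mass generated by one application of the reaction operator. Integrating the definition (\ref{operatorA}) over $\Omega$, with time step $\Delta t/2$, gives
\[
\int_\Omega A_{\Delta t/2}(\phi_i^n) = \int_\Omega \phi_i^n + \frac{\Delta t}{4} \int_\Omega \bigl( R_i(\phi^n) + R_i(\phi^{*}) \bigr),
\]
i.e.\ the initial mass plus $\tfrac{\Delta t}{4}$ times the sum of the two Heun stage-values of $R_i$, which are the quantities denoted $R_i^1$ and $R_i^2$ in the statement. Applying this identity once to $\phi^n$, then the mass conservation of $B_{\Delta t}$, then the same identity again to $B_{\Delta t}(A_{\Delta t/2}(\phi^n))$ — whose two Heun stage-values are by definition $R_i^3$ and $R_i^4$ — and adding the three contributions yields
\[
\int_\Omega \phi_i^{n+1} = \int_\Omega \phi_i^n + \frac{\Delta t}{4}\bigl( R_i^1 + R_i^2 \bigr) + 0 + \frac{\Delta t}{4}\bigl( R_i^3 + R_i^4 \bigr),
\]
which is the claimed discrete mass law. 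In the purely diffusive case $R_i \equiv 0$ all four stage-values vanish, so $\int_\Omega \phi_i^{n+1} = \int_\Omega \phi_i^n$.

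There is no real obstacle: the argument is linearity combined with bookkeeping. The only delicate point is the time-step accounting — the Heun operator is invoked twice with step $\Delta t/2$, each time producing the prefactor $\tfrac{(\Delta t/2)}{2} = \tfrac{\Delta t}{4}$ — and verifying that the states at which $R_i$ is evaluated in the second half-step are exactly those appearing in the definitions of $R_i^3$ and $R_i^4$.
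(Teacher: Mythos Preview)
Your argument is correct and is exactly the approach the paper indicates: take $\theta_i = 1$ in (\ref{operatorB}) to see that $B_{\Delta t}$ preserves mass, then integrate the Heun update (\ref{operatorA}) with step $\Delta t/2$ on each half of the splitting and add the contributions. You have simply written out in full what the paper compresses into a single sentence.
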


\begin{proof}
Consider $\theta_i = 1$ in (\ref{operatorB}) and combine with (\ref{operatorA}).
\end{proof}

\begin{prop} 
When the differential form $\omega$ (\ref{omega}) is exact, the numerical scheme (\ref{scheme}) respects a discrete energy law analogous to (\ref{Elaw}) of the form:
\begin{equation}
\frac{E^{n+1} - E^n}{\Delta t} 
= 
- \sum_{i=1}^M \| \nabla \mu_i^{n+1/2}\|^2  
+ \left(\frac{R_i^1 + R_i^2 + R_i^3 + R_i^4}{4},\mu_i^{n+1/2}\right)
+ \xi
,
\label{discreteElaw}
\end{equation}
with:
$\xi = \frac{E^{n+1} - E^n}{\Delta t} - 
\sum_{i=1}^M \left( \tilde{F}_i \left(A_{\Delta t}(\phi_i^n),B_{\Delta t} (A_{\Delta t}(\phi_i^n))\right),
\frac{\phi_i^{n+1} - \phi_i^n}{\Delta t} \right)$.
Error term $\xi$ tends to $0$ as $\Delta t$ tends to $0$.
\end{prop}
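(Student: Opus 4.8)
The plan is to follow the same template used for the operator $B_{\Delta t}$ alone (the proof of (\ref{errorB})), but now tracking the extra contributions introduced by the two half-steps of reaction $A_{\Delta t/2}$ on either side of the diffusion step. First I would recall that, by construction, the energy $E$ depends only on the densities $\phi$ (not on the reaction), so that the total increment over one time step can be written as a telescoping sum
\[
E^{n+1} - E^n
= \left(E(\phi^{n+1}) - E\bigl(B_{\Delta t}A_{\Delta t/2}(\phi^n)\bigr)\right)
+ \left(E\bigl(B_{\Delta t}A_{\Delta t/2}(\phi^n)\bigr) - E\bigl(A_{\Delta t/2}(\phi^n)\bigr)\right)
+ \left(E\bigl(A_{\Delta t/2}(\phi^n)\bigr) - E(\phi^n)\right).
\]
The middle term is exactly the quantity controlled by the previous proposition applied with initial data $A_{\Delta t/2}(\phi^n)$: it equals $-\Delta t\sum_i \|\nabla\mu_i^{n+1/2}\|^2$ plus the $\sum_i(\tilde F_i,\cdot)$ bilinear expression, up to the $O(\Delta t^3)$ Taylor remainder. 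The two outer terms are the energy variation produced by a pure Heun reaction step on the left and on the right; these are the source of the $(R_i^\bullet,\mu_i^{n+1/2})$ contributions.

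Second, I would make the reaction contributions explicit. Applying the Taylor–Lagrange expansion of $E$ exactly as in (\ref{Evariation}) but to the increment $\phi \mapsto A_{\Delta t/2}(\phi)$, one gets, for each of the two outer brackets, a term of the form $\sum_i \bigl(\mu_i + \tfrac12\sum_j(\text{increment})_j \partial\mu_i/\partial\phi_j,\ \text{increment}_i\bigr)$ plus a cubic remainder. Recognizing that $\tilde F_i$ evaluated at the appropriate states is precisely $\mu_i^{n+1/2}$ (this is the defining identity of $B_{\Delta t}$, equation (\ref{operatorB})), and that the Heun increments sum to $\tfrac{\Delta t}{4}(R_i^1+R_i^2+R_i^3+R_i^4)$ by the definition of $A_{\Delta t}$ in (\ref{operatorA}) and the reaction-rate notation introduced just before the statement, the two outer brackets combine to give $\Delta t\sum_i\bigl(\tfrac{R_i^1+R_i^2+R_i^3+R_i^4}{4},\mu_i^{n+1/2}\bigr)$ up to further cubic remainders. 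Collecting every term that is not of the announced form into $\xi$ — which is in fact forced, since the statement simply \emph{defines} $\xi$ as $(E^{n+1}-E^n)/\Delta t$ minus the bilinear $\tilde F_i$ expression — reduces the proposition to the single assertion that $\xi\to 0$ as $\Delta t\to 0$.

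Third, for the convergence $\xi\to 0$ I would argue by consistency. Each piece entering $\xi$ is either a Taylor remainder of $E$ along an increment which is itself $O(\Delta t)$ (uniformly, assuming the solution stays in a bounded region of state space and $\mu$ is $C^2$ there, exactly the regularity hypothesis already invoked for the $B_{\Delta t}$ proposition), hence $O(\Delta t^3)$ in absolute value and $O(\Delta t^2)$ after dividing by $\Delta t$; or it is a mismatch between $\tilde F_i$ evaluated at the Strang-splitting intermediate states and $\mu_i^{n+1/2}$, which vanishes as $\Delta t\to 0$ because $A_{\Delta t/2}(\phi^n)\to\phi^n$ and the scheme is consistent with (\ref{weakForm}). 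One then bounds $|\xi|$ by a constant times $\Delta t^2$ in terms of $\|\partial^2\mu_i/\partial\phi_j\partial\phi_k\|_{C([0,T],L^\infty)}$, $\|\partial\phi_i/\partial t\|_{C([0,T],L^3)}$ and the reaction terms, in direct analogy with the bound on $\epsilon$.

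The main obstacle I anticipate is bookkeeping rather than conceptual: keeping straight \emph{at which state} each $\tilde F_i$, each $\mu_i$, and each partial derivative is evaluated across the three substeps, and verifying that the cross terms in the summation-by-parts (the step that in the $B_{\Delta t}$ proof ``cancels cross terms'') still cancel when the reaction half-steps shift the base point. A secondary subtlety is that, unlike for $B_{\Delta t}$ in isolation, here $\xi$ is not claimed to be $O(\Delta t^2)$ but only $o(1)$; this weaker claim is what lets the argument go through without assuming the Strang intermediate states are themselves second-order accurate reconstructions of the continuous solution, and I would be careful to state the hypotheses (boundedness of the discrete trajectory, $C^2$ regularity of $\mu$ and of $R$) under which even the $o(1)$ — or indeed the sharper $O(\Delta t^2)$ — holds.
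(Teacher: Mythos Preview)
Your argument is correct, and you correctly identify that the identity (\ref{discreteElaw}) is essentially forced by the definition of $\xi$, so that the only substantive claim is $\xi\to 0$. Your consistency argument for that limit is in the same spirit as the paper's.

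Where you differ is in how you extract the two explicit terms on the right-hand side. You telescope $E^{n+1}-E^n$ across the three substeps of the Strang splitting, apply the $B_{\Delta t}$ proposition to the middle bracket, and Taylor-expand the energy across each Heun half-step to produce the reaction contributions. The paper takes a shorter route: it first rewrites the \emph{entire} Strang step as a single one-step variational identity,
\[
(\phi_i^{n+1},\theta_i) = (\phi_i^n,\theta_i) + \Delta t\!\left(\tfrac{R_i^1+R_i^2+R_i^3+R_i^4}{4},\theta_i\right) - \Delta t(\nabla\mu_i^{n+1/2},\nabla\theta_i),
\]
together with the defining relation $(\mu_i^{n+1/2},\nu_i)=(\tilde F_i,\nu_i)$, and then simply tests with $\theta_i=\mu_i^{n+1/2}$ and $\nu_i=(\phi_i^{n+1}-\phi_i^n)/\Delta t$. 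This yields the \emph{exact} algebraic identity
\[
\sum_i\!\left(\tilde F_i,\tfrac{\phi_i^{n+1}-\phi_i^n}{\Delta t}\right)=\sum_i\!\left(\tfrac{R_i^1+R_i^2+R_i^3+R_i^4}{4},\mu_i^{n+1/2}\right)-\sum_i\|\nabla\mu_i^{n+1/2}\|^2,
\]
from which (\ref{discreteElaw}) follows immediately by the definition of $\xi$. No Taylor expansion is needed at this stage, and the ``bookkeeping obstacle'' you anticipate (tracking base points for $\tilde F_i$, $\mu_i$, and partial derivatives across substeps, and checking that cross terms still cancel) simply does not arise: the cancellation is the same single pairing $(\mu_i^{n+1/2},\cdot)=(\tilde F_i,\cdot)$ as in the pure $B_{\Delta t}$ case. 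Your telescoping approach buys a more transparent attribution of each contribution to a specific substep, which could be useful if one later wants a sharp $O(\Delta t^2)$ bound on $\xi$ rather than the mere $o(1)$ claimed; the paper's approach buys brevity and avoids all the intermediate-state mismatches you flag in your final paragraph.
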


\begin{proof}
Let us write numerical scheme (\ref{scheme}) in an alternative form:
\begin{align*}
\forall i \in \{1,...,M\}, \\
(\phi_i^{n+1},\theta_i) 
 &= (\phi_i^n,\theta_i) + 
\Delta t \left(\frac{R_i^1 + R_i^2 + R_i^3 + R_i^4}{4},\theta_i\right) 
- \Delta t(\nabla \mu_i^{n+1/2},\nabla \theta_i),\\
(\mu_i^{n+1/2},\nu_i) &= \left( \tilde{F}_i \left(A_{\Delta t}(\phi_i^n),B_{\Delta t} (A_{\Delta t}(\phi_i^n))\right),\nu_i\right).\\
\end{align*}
Then, considering $\theta_i = \mu_i^{n+1/2}$, $\nu_i = \frac{\phi_i^{n+1} - \phi_i^n}{\Delta t}$ and summing all the equations we get:
\[
\sum_{i=1}^M\left(\tilde{F}_i,\frac{\phi_i^{n+1} - \phi_i^n}{\Delta t}\right)
=
\sum_{i=1}^M \left(\frac{R_i^1 + R_i^2 + R_i^3 + R_i^4}{4},\mu_i\right)
- \sum_{i=1}^M \|\nabla \mu_i^{n+1/2} \|^2.
\]
Using the definition of $\xi$, we obtain formula (\ref{discreteElaw}).
Let us remark that, for all $i \in \{1,...,M\}$ we get:
\begin{align*}
\lim\limits_{\Delta t \rightarrow 0} &
\sum_{i=1}^M\left( \tilde{F}_i \left(A_{\Delta t}(\phi_i^n),B_{\Delta t} (A_{\Delta t}(\phi_i^n))\right),\frac{\phi_i^{n+1} - \phi_i^n}{\Delta t}\right)\\
& = \sum_{i=1}^M\left(\mu_i(t^n),\frac{\partial \phi_i}{\partial t}(t^n)\right)
= \frac{dE}{dt}(t^n).
\end{align*}
Therefore we can conclude, using the definition of $\xi$, that: $\lim \limits_{\Delta t \rightarrow 0} \xi = 0$.
\end{proof}

\begin{rem}
Formula (\ref{discreteElaw}) is a coarse approximation. We only derive it in order to give the general form of the error term.
A finer approximation seems possible starting for instance from the Baker-Campbell-Hausdorff formula, 
combining the Heun error and B operator error terms (\ref{errorB}).
\end{rem}

\section{Pattern formation}

In this section we numerically investigate the augmented Gray-Scott system introduced in first section (\ref{mGrayScott}) to exemplify the methodology developed in this article.

\subsection{Effect of nonlinear diffusion terms}

In this paragraph, we study the effect of nonlinear diffusion terms on the formation of patterns in a square geometry $\Omega = [0,1]^2$.
Following \cite{pearson}, as initial conditions we consider a square, denoted by S, centered at $(0.5,0.5)$ with size $0.05$, and we define:
\begin{align}
\phi_1(t=0) &= (0.5 + 0.05 U)\mathbbm{1}_{S} + \mathbbm{1}_{\Omega \setminus S},\notag\\
\phi_2(t=0) &= (0.25 + 0.05 V)\mathbbm{1}_{S},\label{ICGS}
\end{align}
with $U,V$ random numbers uniformly distributed on $[0,1]$.
Parameters are set to: 
$\Delta t = 1$, $h_{\min} = 0.002$, $h_{\max} = 0.07$, $F=0.037, k=0.06$, $\alpha_1 = \alpha_2 = 2$.
By default, parameters that are not mentioned are set to $0$.
For tests involving self-diffusion, we have taken 
$\Delta t = 0.1$
to ensure numerical stability.

Results are displayed in Figure \ref{figRD} for the symmetric cases where $d_{12} = d_{21}$ and $\beta_{12} = \beta_{21} = 2$.
The first case ($d_1 = 2e^{-5}, d_2 = 1e^{-5}$) is a test against a known solution, and shows the classical Gray-Scott pattern for linear diffusion. We will refer to this case as the reference case. 
By slightly perturbing this reference case with self and/or symmetric cross-diffusion terms, 
we observe a qualitative change in the patterns created by the system:
cases one ($d_{11} = d_{22} = 1e^{-5}$) and two ($d_2 = 1e^{-5}, d_{11}=1e^{-5}$) generate a ring-shaped travelling wave, 
which differ by their contours, whereas cases three ($d_{1} = 2e^{-5}, d_{22} = 1e^{-5}, d_{12} = d_{21} = 1e^{-6}$) and four ($d_{1} = 3e^{-5}, d_{2} = 1e^{-5}, d_{12} = d_{21} = 5e^{-6}$) generate complex Turing-like patterns.

In Figure \ref{figRD2}, we display the results in the asymmetric cases $M=2$, $d_{21} = 0$ and $\beta_{12} = 1$.
Similarly to the symmetric cases, we slightly perturb the reference case with self- and/or asymmetric cross-diffusion terms, 
and similarly, we also observe qualitative change in pattern formation: the fifth case ($d_{11} = d_{22} = 1e^{-5}, d_{12}=1e^{-5}$), sixth case ($ d_1 = 2e^{-5}, d_{22} = 1e^{-5}$) and the seventh case ($ d_1 = 3e^{-5}, d_2 = 1e^{-5}, d_{12} = 1e^{-5}$) also generate complex Turing-like patterns. 

For each case, the energy is monitored (Figure \ref{figRDenergy}), computed using formula (\ref{symLaw}) for symmetric cases (reference case, and cases 1 to 4), 
and using formulas (\ref{aSymE}) for asymmetric cases (5 to 7). 
Interestingly, we observe that in all cases energy decreases with respect to time.
However, this could be due to a specific choice of parameter.

Classically, in system (\ref{GrayScott}), different patterns such as spots, stripes and travelling waves are obtained by varying the reactive term. By contrast, model (\ref{mGrayScott}) generates different kinds of patterns for the same reactive term.
Moreover, model (\ref{mGrayScott}) creates original patterns due to nonlinearities, not observed with linear diffusion only.

\begin{figure}[ht]
\begin{subfigure}{.5\textwidth}
  \centering
  \includegraphics[width=\linewidth]{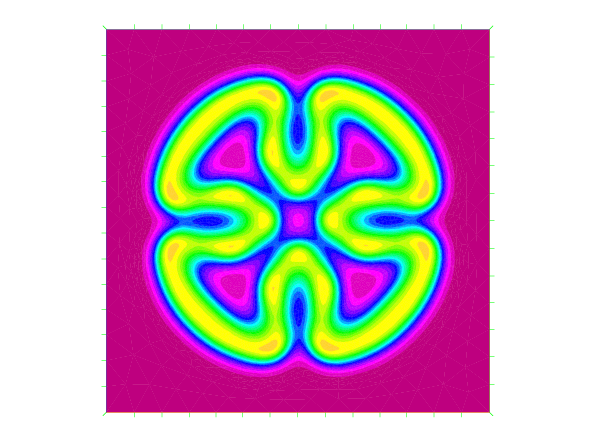}  
  \caption{Reference - classical Gray-Scott}
  \label{fig:sub-first}
\end{subfigure}
\\
\begin{subfigure}{.5\textwidth}
  \centering
  \includegraphics[width=\linewidth]{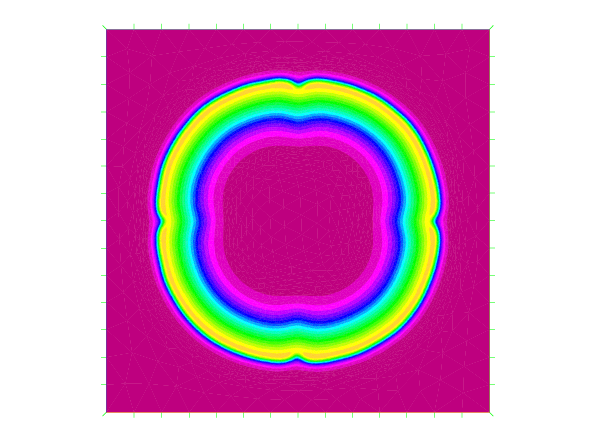}  
  \caption{Case 1}
  \label{fig:sub-first}
\end{subfigure}
\begin{subfigure}{.5\textwidth}
  \centering
  \includegraphics[width=\linewidth]{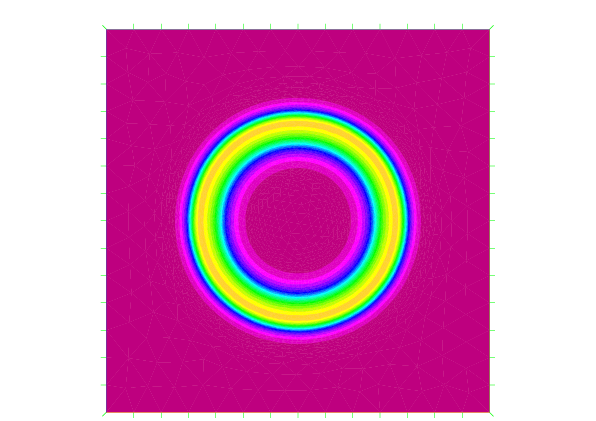}  
  \caption{Case 2}
  \label{fig:sub-second}
\end{subfigure}
\begin{subfigure}{.5\textwidth}
  \centering
  \includegraphics[width=\linewidth]{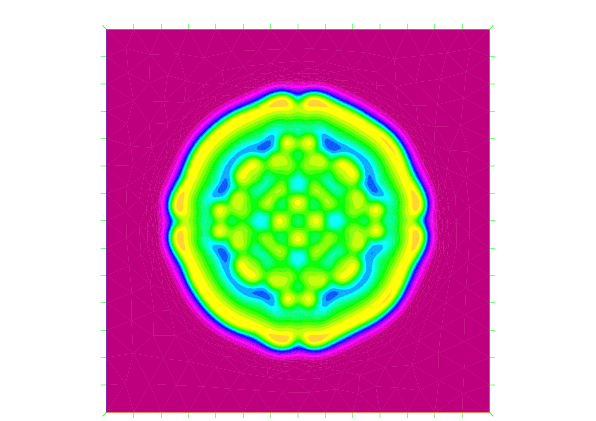}  
  \caption{Case 3}
  \label{fig:sub-second}
\end{subfigure}
\begin{subfigure}{.5\textwidth}
  \centering
  \includegraphics[width=\linewidth]{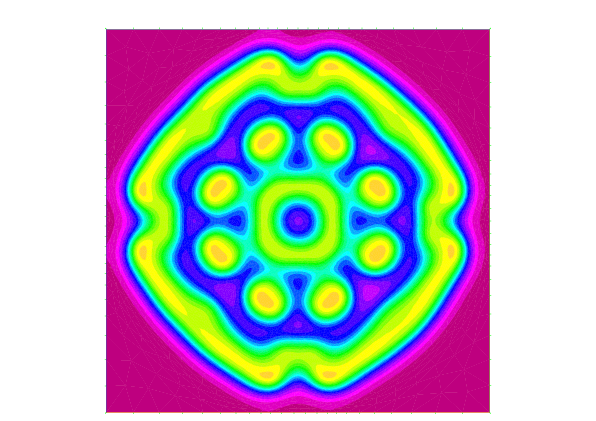}  
  \caption{Case 4}
  \label{fig:sub-second}
\end{subfigure}
\caption{Nonlinear pattern formation driven by different combinations of linear, self and symmetric cross-diffusion.
Simulations of modified Gray-Scott reaction-diffusion system (\ref{mGrayScott}) 
with boundary conditions (\ref{BC}) starting from initial conditions (\ref{ICGS}).
Plots of density $\phi_1$ (color code: continuous scale from yellow for $\phi_1 \in [0,0.25]$, to green for $\phi_1 \in [0.25,0.5]$, to blue for $\phi_1 \in [0.5,0.75]$, to purple for $\phi_1 \in [0.75,1]$).
The same reaction term triggers different patterns depending on the combinations of nonlinear diffusion coefficients.}
\label{figRD}
\end{figure}

\begin{figure}[ht]
\begin{subfigure}{.5\textwidth}
  \centering
  \includegraphics[width=\linewidth]{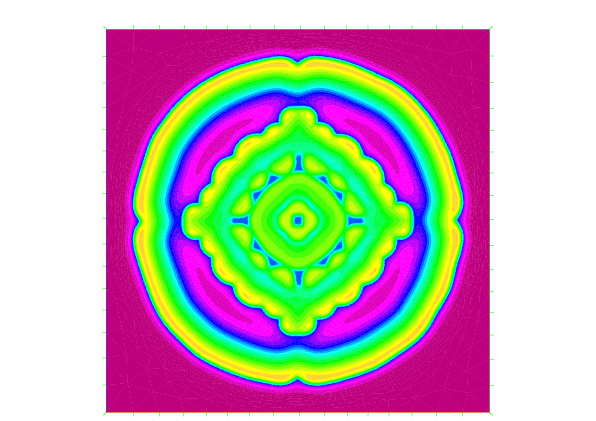}  
  \caption{Case 5}
  \label{fig:sub-first}
\end{subfigure}
\\
\begin{subfigure}{.5\textwidth}
  \centering
  \includegraphics[width=\linewidth]{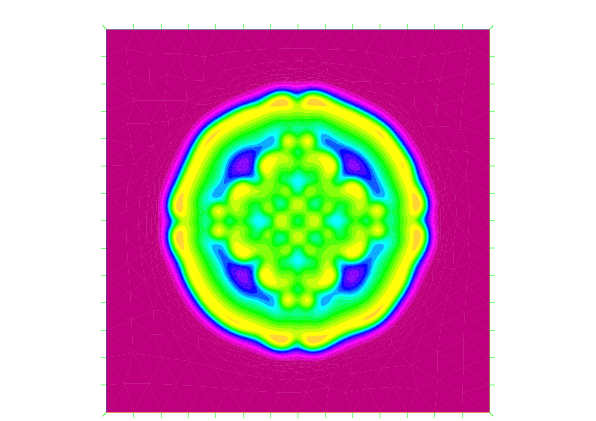}  
  \caption{Case 6}
  \label{fig:sub-first}
\end{subfigure}
\begin{subfigure}{.5\textwidth}
  \centering
  \includegraphics[width=\linewidth]{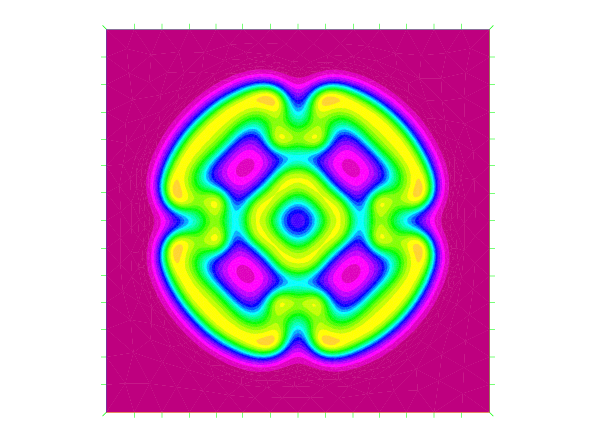}  
  \caption{Case 7}
  \label{fig:sub-second}
\end{subfigure}
\caption{Nonlinear pattern formation driven by different combinations of linear, self and asymmetric cross-diffusion.
Simulations of modified Gray-Scott reaction-diffusion system (\ref{mGrayScott}) 
with boundary conditions (\ref{BC}) starting from initial conditions (\ref{ICGS}).
Plots of density $\phi_1$ (same color code as for Figure \ref{figRD}).
Like the symmetric case, the same reaction term triggers different patterns depending on the combinations of nonlinear diffusion coefficients.}
\label{figRD2}
\end{figure}

\subsection{Effect of geometry}

In this numerical experiment, the aim is to study the effect of geometry on pattern formation.
Results are displayed on Figure \ref{figGeom}.

We consider an astroid domain with concave borders and four singularities,
using parameters of 
case 
3, then a trefoil domain with cusps, using parameters of 
case 2 
, and a rectangular domain (anisotrope), using parameters of 
case  
7.
The density is forced to adapt into a different shape, and this effect backpropagates all along the pattern until the center.
We note that the effect is local, in the sense that the evolution of the density is the same as long as it is far from the border. Eventually, the patterns differs greatly from the one observed in previous sections.

Then, we consider a not simply connex domain with a trefoil perturbation (with singularities on the borders) at the center.
Two kinds of nonlinear diffusion are tested: combination of self- and cross-diffusion, like in  
case 
5, and combination of linear diffusion and asymmetric cross-diffusion, as in 
case 
6. The pattern formation presents similarities on the front of propagation (similar to the unperturbed case), 
but clearly differs around the center, due to the perturbation.

Once again the energy is monitored for each test, computed using formula (\ref{symLaw}) for symmetric cases, 
and using formulas (\ref{aSymE}) for asymmetric cases.
Interestingly, in the case of the astroid domain, a steady state is rapidly reached, 
minimizing (\ref{symLaw}) while both densities $\phi_1$ and $\phi_2$ stop evolving with respect to time.

\begin{figure}[ht]
\begin{subfigure}{.5\textwidth}
  \centering
  \includegraphics[width=\linewidth]{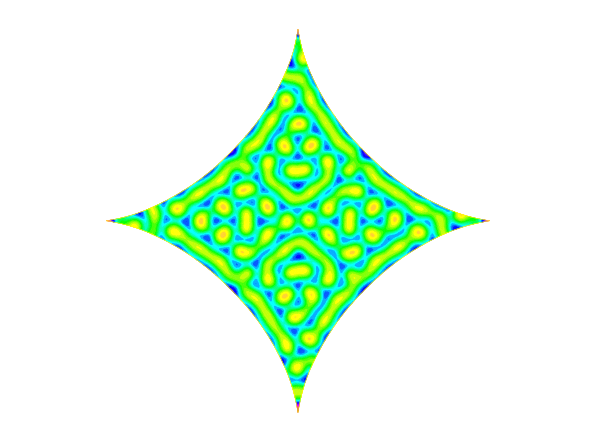}  
  \caption{Case 3 within an astroid domain}
  \label{fig:sub-first}
\end{subfigure}
\begin{subfigure}{.5\textwidth}
  \centering
  \includegraphics[width=\linewidth]{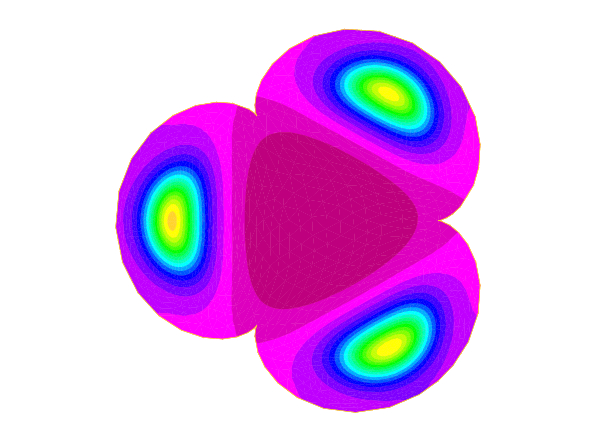}  
  \caption{Case 2 within a trefoil domains}
  \label{fig:sub-first}
\end{subfigure}
\begin{subfigure}{1.0\textwidth}
  \centering
  \includegraphics[width=\linewidth]{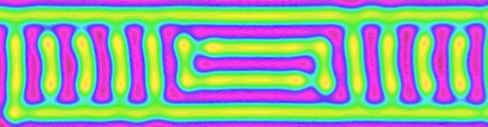}  
  \caption{Case 7 within a rectangle domain}
  \label{fig:sub-second}
\end{subfigure}
\\
\begin{subfigure}{.5\textwidth}
  \centering
  \includegraphics[width=\linewidth]{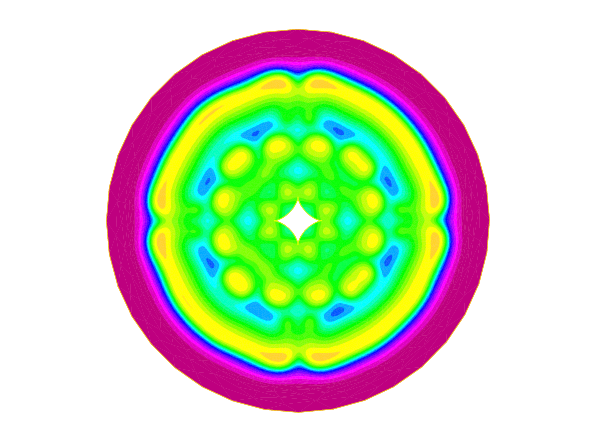}  
  \caption{Case 3 perturbed with an astroid hole}
  \label{fig: sub-third}
\end{subfigure}
\begin{subfigure}{.5\textwidth}
  \centering
  \includegraphics[width=\linewidth]{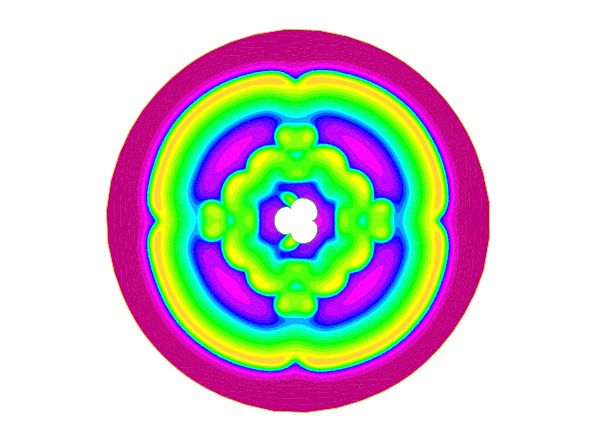}  
  \caption{Case 5 perturbed with a trefoil hole}
  \label{fig:sub-second}
\end{subfigure}
\caption{Nonlinear pattern formation within complex geometries.
Simulations of modified Gray-Scott reaction-diffusion system (\ref{mGrayScott}) 
with boundary conditions (\ref{BC}) starting from initial conditions (\ref{ICGS}).
Plots of density $\phi_1$ (same color code as for Figure \ref{figRD}).}
\label{figGeom}
\end{figure}

\begin{figure}[ht]
\begin{subfigure}{\textwidth}
  \centering
  \includegraphics[width=\linewidth]{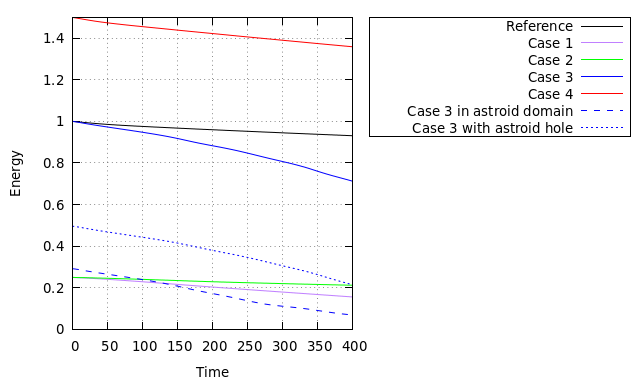}
  \caption{Symmetric cases corresponding to Figures \ref{figRD} and \ref{figGeom}. Energy computed using formula (\ref{symLaw})}
  \label{fig:sub-first}
\end{subfigure}
\begin{subfigure}{\textwidth}
  \centering
  \includegraphics[width=\linewidth]{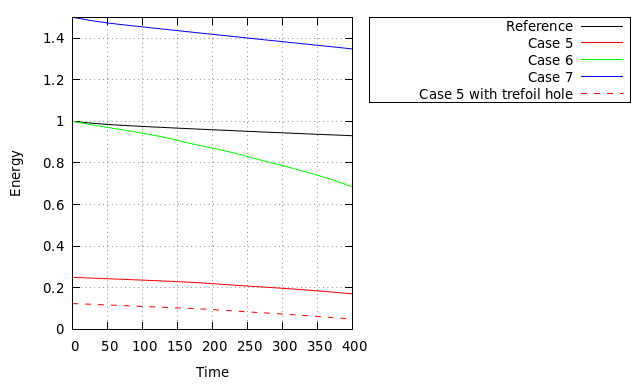} 
  \caption{Asymmetric cases corresponding to Figures \ref{figRD2} and \ref{figGeom}. Energy computed using formula (\ref{aSymE}}
  \label{fig:sub-first}
\end{subfigure}
\caption{Time evolution of energy with respect to time.
In both cases, energy is scaled with the initial energy of the reference case.
We observe that, for all tests, energy is a decreasing function of time.
However, it could be due to a specific choice of parameters.}
\label{figRDenergy}
\end{figure}

\section*{Conclusion} 

\addcontentsline{toc}{section}{Conclusion} 

In this article we have proposed a unified framework (\ref{PDE}) to study reaction-diffusion systems containing self- and cross-diffusion.
We have theoretically established an energy law 
((\ref{Elaw}),(\ref{symLaw}) for the symmetric case, and (\ref{asymElaw}),(\ref{aSymE}) for the asymmetric case) 
that was later exemplified by numerical simulations (Figure \ref{figRDenergy}).
We have derived a numerical method (\ref{scheme}) for this class of systems, 
and have shown its key property, a numerical version of the energy law (\ref{errorB}), (\ref{discreteElaw}).
Our approach complements the classical standard linear stability analysis, 
allowing the geometrical description of nonlinear patterns, 
the numerical study of boundary effects, the interaction of different kinds of nonlinear diffusion, while monitoring the energy of the system.

In order to exemplify our methodology, we have introduced a generalization of the classical Gray-Scott system, 
augmented with self- and cross-diffusion terms (\ref{mGrayScott}). 
During numerical simulations of this model, we have observed patterns that are clearly distinct from those obtained with linear diffusion, even for the same reaction term 
(Figures \ref{figRD}, \ref{figRD2} and \ref{figGeom}).
Those patterns are due to nonlinearity, and are original to our work.

As further work, a full and systematic numerical exploration, in the spirit of Pearson \cite{pearson}, 
might be done on system (\ref{mGrayScott}), in order to map the parameter space with different pattern types.
Several numerical investigations could be done such as
evaluating the effect of different kinds of boundary conditions (more general than (\ref{BC})), 
study cases with more than two species or study branching process \cite{branching} perturbed by self- and cross-diffusion.
Another interesting aspect is the interaction between nonlinear terms. 
This phenomenon could be studied in more detail using numerical tools, 
and hopefully provide insight for a more theoretical work.

\section*{Acknowledgement}
The author would like to thank Romain Veltz, Mathieu Desroches and Audric Drogoul for fruitful discussions and feedback about this work.

\bibliography{RDarxiv}

\begin{thebibliography}{10}
\expandafter\ifx\csname url\endcsname\relax
  \def\url#1{\texttt{#1}}\fi
\expandafter\ifx\csname urlprefix\endcsname\relax\def\urlprefix{URL }\fi
\expandafter\ifx\csname href\endcsname\relax
  \def\href#1#2{#2} \def\path#1{#1}\fi

\bibitem{turing}
A.~Turing, The chemical basis of morphogenesis, Phil. Trans. Roy. Soc. London
  B237 (1952) 37–7.

\bibitem{BZ1}
B.~Belousov, Periodically acting reaction and its mechanism (1959).

\bibitem{BZ2}
A.~Zhabotinsky, Periodical process of oxidation of malonic acid solution.
  (1964).

\bibitem{CIMA}
V.~Castets, E.~Dulos, J.~Boissonade, P.~De~Kepper, Experimental evidence of a
  sustained standing {T}uring-type nonequilibrium chemical pattern, Phys. Rev.
  Letters 64(24) (1990) 2953–2956.

\bibitem{KS}
E.~F. Keller, L.~A. Segel, Model for chemotaxis, J. Theor. Biol. 30 (1971)
  225–234.

\bibitem{GM}
A.~Gierer, H.~Meinhardt, A theory of biological pattern formation, Kybernetik
  12 (1972) 30–39.

\bibitem{patternFN}
G.~Gambino, M.~Lombardo, G.~Rubino, M.~Sammartino, Pattern selection in the
  2{D} {F}itzhugh–{N}agumo model, Ricerche di Matematica (2018).

\bibitem{combustion}
Y.~Zeldovich, Theory of combustion and detonation of gases, USSR Academy of
  Sciences (1944).

\bibitem{fire}
E.~Pastor, L.~Zarate, E.~Planas, J.~Arnaldos, Mathematical models and
  calculations systems for the study of wildland fire behavior, Combust. Sci.
  (2003).

\bibitem{fisher}
R.~A. Fisher, The wave of advance of advantageous genes, Annals of Eugenics 7
  (4) (1937) 353–369.

\bibitem{branching}
K.~Kawasaki, A.~Mochizuki, M.~Matsushita, T.~Umeda, N.~Shigesada, Modeling
  spatio-temporal patterns created by bacillus-subtilis, J. Theor. Biol.
  (1997).

\bibitem{SKT}
N.~Shigesada, K.~Kawasaki, E.~Teramoto, Spatial segregation of interacting
  species, J. Theor. Biol. 79 (1979) 83--99.

\bibitem{mimura}
M.~Iida, M.~Mimura, H.~Ninomiya, Diffusion, cross-diffusion and competitive
  interaction, J. Math. Biol. 53 (2006) 617--641.

\bibitem{martinez}
Y.~Lou, S.~Martinez, Evolution of cross-diffusion and self-diffusion, J. Biol.
  Dyn. 3(4) (2009) 410--429.

\bibitem{boris}
B.~Andreianov, M.~Bendahmane, R.~Ruiz-Baier, Analysis of a finite volume method
  for a cross-diffusion model in population dynamics, Math. Mod. Meth. Appl.
  Sc. 21 (2011).

\bibitem{trescases}
L.~Desvillettes, T.~Lepoutre, A.~Moussa, A.~Trescases, On the entropic
  structure of reaction-cross diffusion systems, Comm. Partial Differential
  Equations 40 (2015) 1705--1747.

\bibitem{gambino}
G.~Gambino, M.~Lombardo, M.~Sammartino, Cross-diffusion-induced subharmonic
  spatial resonances in a predator-prey system, Phys. Rev. E E 00 (2018)
  002200.

\bibitem{turingSKT}
Q.~Li, Z.~Liu, S.~Yuan, Cross-diffusion induced {T}uring instability for a
  competition model with saturation effect, Applied Mathematics and Computation
  347 (2019) 64--77.

\bibitem{turingSIR}
M.~Duan, L.~Chang, Z.~Jin, Turing patterns of an {SI} epidemic model with
  cross-diffusion on complex networks, Physica A: statistical mechanics and its
  applications 533 (2019).

\bibitem{turingSelf}
H.~Yin, X.~Xiao, X.~Wen, Turing patterns in a predator-prey system with
  self-diffusion, Abstract and Applied Analysis (2013).

\bibitem{CH}
J.~W. Cahn, J.~E. Hilliard, Free energy of a nonuniform system. {I}.
  {I}nterfacial free energy, J. Chem. Phys. 28 (1958) 258.

\bibitem{JCP2013}
G.~Guillen-Gonzales, G.~Tierra, On linear schemes for a {C}ahn-{H}illiard
  diffuse interface model, J. Comp. Phys. 234 (2013) 140--171.

\bibitem{GS}
P.~Gray, S.~K. Scott, Autocatalytic reactions in the isothermal continuous
  stirred tank reactor: isolas and other forms of multistability, Chem. Eng.
  Sci. 38(1) (1983) 29–43.

\bibitem{pearson}
J.~Pearson, Complex patterns in a simple system, Science 261 (1993) 189--192.

\bibitem{Selkov}
E.~Sel'kov, Self-oscillations in glycolysis, European J. Biochem. (1968).

\bibitem{perthame}
B.~Perthame, Parabolic equations in biology: growth, reaction, movement and
  diffusion, Springer, 2015.

\bibitem{JCP2019}
B.~Aymard, U.~Vaes, M.~Pradas, S.~Kalliadasis, A linear, second-order, energy
  stable, fully adaptive finite element method for phase-field modelling of
  wetting phenomena, J. Comp. Phys. X 2 (2019).

\bibitem{Allaire}
G.~Allaire, Numerical Analysis and Optimization: An Introduction to
  Mathematical Modelling and Numerical Simulation, Oxford Science Publication,
  2007.

\bibitem{freefem}
F.~Hecht, New development in {F}ree{F}em++, J. Numer. Math. 20 (2012)
  (3--4):251–265.

\bibitem{strang}
G.~Strang, On the construction and comparison of difference schemes, SIAM
  Journal on Numerical Analysis 5(3) (1968) 506--517.

\end{thebibliography}

\end{document}